\newtheorem{definition}{Definition}
\newtheorem{theorem}{Theorem}
\newtheorem{lemma}{Lemma} 
\newtheorem{example}{Example}
\newtheorem{scenario}{Scenario}
\begin{document}

\title{Key Pre-Distributions From \\ Graph-Based Block Designs}

\author{Jie~Ding, Abdelmadjid~Bouabdallah, 
        and~Vahid~Tarokh
        \thanks{This work was carried out in the framework of the Labex MS2T, which was funded by the French Government, through the program ``Investments for the future'' managed by the National Agency for Research (Reference ANR-11-IDEX-0004-02).}
       \thanks{J. Ding and V. Tarokh are with the School of Engineering and Applied Sciences, 
		Harvard University, Cambridge, MA 02138, USA. E-mail: jieding@g.harvard.edu , vahid@seas.harvard.edu .}
        \thanks{A. Bouabdallah is with the Sorbonne universit$\acute{\textrm{e}}$s, Universit$\acute{\textrm{e}}$ de technologie de Compi$\grave{\textrm{e}}$gne, CNRS, Heudiasyc UMR 7253, CS 60 319, 60203 Compi$\grave{\textrm{e}}$gne cedex, France. E-mail: bouabdal@hds.utc.fr .}
      }

%



\maketitle

\begin{abstract}


With the development of wireless communication technologies which considerably contributed to the development of wireless sensor networks (WSN), we have witnessed an ever-increasing WSN based applications which induced a host of research activities in both academia and industry. Since most of the target WSN applications are very sensitive, security issue is one of the major challenges in the deployment of WSN. One of the important building blocks in securing WSN is key management. Traditional key management solutions developed for other networks are not suitable for WSN since WSN networks are resource (e.g. memory, computation, energy) limited. Key pre-distribution algorithms have recently evolved as efficient alternatives of key management in these networks. In the key pre-distribution systems, secure communication is achieved between a pair of nodes either by the existence of a key allowing for direct communication or by a chain of keys forming a key-path between the pair. 

In this paper, we propose methods which bring prior knowledge of network characteristics and application constraints into the design of key pre-distribution schemes, in order to provide better security and connectivity while requiring less resources. Our methods are based on casting the prior information as a graph.  Motivated by this idea, we also propose a class of quasi-symmetric designs 
referred here to as g-designs. 
These produce key pre-distribution schemes that significantly improve upon the existing constructions based on unital designs. We give some examples, and point out open problems for future research.

\end{abstract}

\begin{IEEEkeywords}
Balanced incomplete block design, quasi-symmetric design, key pre-distribution,
sensor networks, graphs. 
\end{IEEEkeywords}

%
\IEEEpeerreviewmaketitle

\section{Introduction} \label{introSection}

\setcounter{equation}{0}
\renewcommand{\theequation}{\arabic{equation}}

%


 \IEEEPARstart{W}{ireless} sensor networks (WSN) typically consist of  a large number of sensor nodes with limited memory and power.  These networks are used in both military and civilian applications. In military applications, sensor nodes may be deployed for example in battlefield surveillance, while in environmental applications, distributed sensors could monitor physical or environmental conditions such as temperature, sound, pressure, etc. and cooperatively pass their data through the network to a main location \cite{yick2008wireless,akyildiz2002wireless}. 
 Because of the sensitivity of most WSN applications, security issue is one of the major challenges in the deployment of WSN.

Security is an essential question for many sensor network applications, especially for military applications. Providing security to small sensor nodes is challenging because of the resources limitations of sensor nodes in terms of storage, computations, communications, and energy. One of the important building blocks for the development of security solutions in WSN is key management. The key management scheme design is more complicated due to the characteristics of the WSN such as: (1) The vulnerability of nodes to physical attack, where the deployment in a hostile area makes the attacker capable to simply compromise any node and to reveal its security materials (e.g. keys, functions, ...); 
(2) The nature of wireless communication, where the radio links are insecure and an attacker can eavesdrop on the radio transmissions, inject bits in the channel, and replay previously overheard packets; 
(3)  The density and the large size of the network which make the control of all nodes very hard.

For security or privacy reasons, it is often critical to build encrypted communications between two sensor nodes using a common secret key. 
Key pre-distribution scheme (KPS) is a classical way to set up secret keys among sensor nodes before the deployment phase.
Compared with online key exchange protocols, key pre-distribution is more attractive for networks
consisting of large number of nodes with limited communication/computation resources \cite{RKP}. 

Over the last decade, a host of research on key pre-distribution issue for WSN have been conducted and many solutions have been proposed in literature. 
Existing KPS fall into two categories: probabilistic  and deterministic schemes. 
In probabilistic schemes,  a direct connection between each two nodes is established with certain probability, i.e. probability that these two nodes share a common key.
In deterministic schemes, however, each pair of nodes are known to be directly connected or not. 

Eschenauer and Gligor \cite{RKP} proposed a random key pre-distribution (RKP) scheme. Later on, there have been some  improvements on RKP, e.g. 
improvements on its resilience by increasing the ``intersection threshold''
(the least number of common keys for two nodes to establish a direct connection) 
\cite{Chan}, 
or improvements on its resilience as well as higher probability of sharing keys by 
using deployment knowledge \cite{Du,Group_deployment,ito2005key,liu2008group,wei2005product}.
Schemes called ``multiple key spaces'' are proposed that combine different KPS to achieve better performance \cite{du2005pairwise,lee2005deterministic,liu2005establishing}.

A simple deterministic scheme assigns a distinct key to each link,  and $b-1$ pairwise keys to each node, where $b$ is the number of nodes.
Choi et al. \cite{Choi} proposed an improved method where each node only needs to store $(b+1)/2$ keys. However, these methods may not be easily scalable.
Camtepe et al. \cite{camtepe2004combinatorial,pioneer}  proposed a novel method that uses block designs for key pre-distribution. 
They proposed a deterministic key pre-distribution scheme that maps a symmetric balanced incomplete block design (SBIBD) or  
generalized quadrangles (GQ) to key pre-distribution. 

%

There are some other works that use design theory to construct effective KPS.
Lee et al. \cite{lee2005combinatorial} introduced the common intersection designs to KPS.
Chakrabarti et al. \cite{chakrabarti2006key} used transversal designs and merging block techniques. 
Ruj et al. \cite{PBIBD1} proposed a KPS that is based on partially balanced incomplete block designs (PBIBD).
They also used triple system to design a probabilistic KPS \cite{BD4KPS2}. 
Later on, Bose et al. 
\cite{PBIBD2} proposed an improved construction that combines several PBIBDs.
Bechkit et al. \cite{unital} proposed the unital-based key pre-distribution scheme (UKP), a deterministic scheme that improves the scalability of a network. More detailed surveys could be found in \cite{hwang2004revisiting,camtepe2005key,lee2008construction}.

\subsection{Motivations for Graph-Based KPS} \label{sec_motivation2GKPS}

Usually,  the main goal of KPS is to seek a design solution that provides more connectivity coverage while requiring less memory  (or using as few keys as possible). 
Existing designs are often evaluated under criteria such as 
network connectivity, average path length, network resiliency, storage overhead, and network scalability.

Our contributions are motivated by the following observations:
\begin{itemize}

\item {\bf Observation 1}: In many applications, there are several intended deployment locations, and typically a number of sensor nodes are deployed in each of these locations.
In hierarchical scenarios, each group of sensors placed in a location must pass their data to sensor nodes of higher ranks. This means that for a hierarchical sensor network, nodes in the same group need more connectivity than those across groups (Figure \ref{Obs1}). This can be used to reduce the number of required keys.

\item {\bf Observation 2}: In some scenarios, a group of sensor nodes are naturally set in a master-slave architecture. For instance, the commander of a military needs more communications with his lower rank staff.
This means that one or some nodes are in charge of collecting data/sending command signals to the remaining nodes  (Figure \ref{Obs2}). So there are some important connections that we need to establish with higher efficiency and security.

\item {\bf Observation 3}:  Two sensor nodes can communicate with each other only in a certain distance referred to as the radio frequency (RF) range (Figure \ref{Obs3}).  If it is known that certain pairs never connect directly due to being outside of each other's radio frequency range, then it would be a waste of resources to assign common keys to them.

\item {\bf Observation 4}: If it is known in advance that certain connections are more likely to be eavesdropped, the corresponding nodes should not share common keys  (Figure \ref{Obs4}). 
\end{itemize}

\begin{figure}[h]
\centering
\subfloat[]{\includegraphics[width=2.5in]{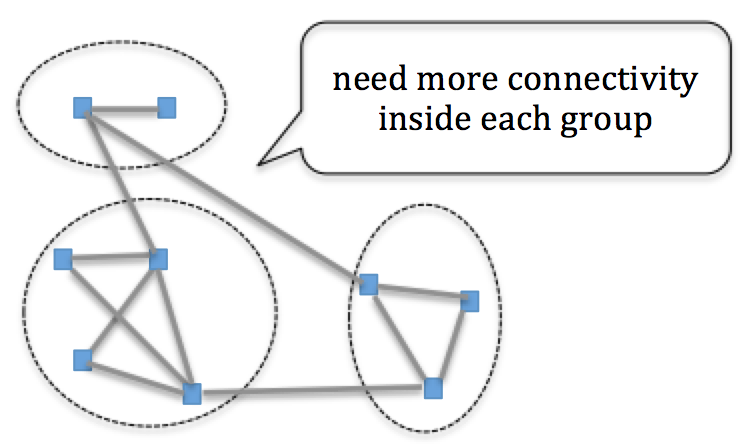}%
\label{Obs1}}
\hfil
\subfloat[]{\includegraphics[width=2.1in]{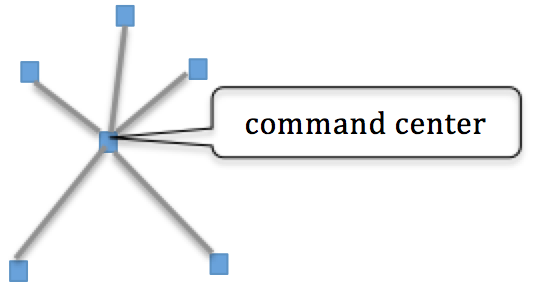}%
\label{Obs2}}
\\
\subfloat[]{\includegraphics[width=2in]{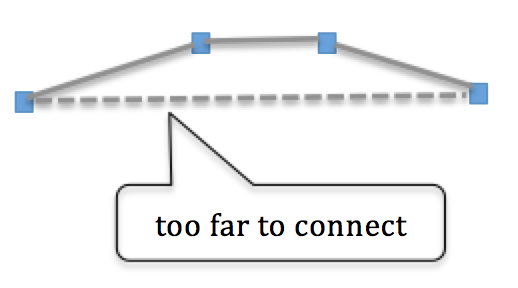}%
\label{Obs3}}
\hfil
\subfloat[]{\includegraphics[width=2.5in]{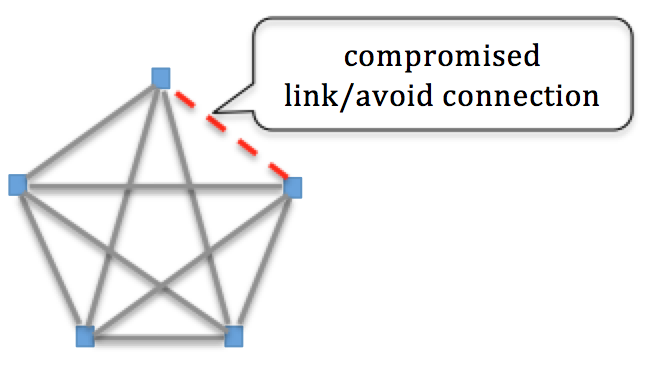}%
\label{Obs4}}
\caption{Observations}
\label{Observations}
\end{figure}

These observations motivate the use of prior information in designing improved KPS schemes. 

%
%
%

\subsection{Contribution and Organization of This Work} \label{sec_org}

We propose graph-based key pre-distributions (block designs) that 
incorporate prior knowledge of network characteristics and constraints. 
This provides better security and connectivity while requiring less resources once properly used.
We elaborate on two practical scenarios, and explain why the graph-based design is preferable or even required. 
Some previous work that used deployment knowledge to optimize KPS may look similar, but is different in concept. 
For example, the scope of ``deployment'' was usually in a geographic sense. However, the model that communication should not pass through some links of potential danger was rarely studied, up to our knowledge.
A general framework that casts the prior information as graphs is the main motivation of this work.

We first briefly review the basics of block design theory in Section \ref{sec_BIBDIntro}.
Especially, we propose $g$-designs, a class of quasi-symmetric designs, and initiate the study of the applications of them to KPS.
We propose the concept of graph-based KPS in Section \ref{sec_graph_design}. 
In Section \ref{sec_example_can}, we demonstrate the improvements provided by graph-based KPS and 
$g$-designs in a specific scenario.
In Section \ref{sec_example_should} we study another scenario. We further provide an algorithmic framework (referred to MAR) for KPS design for the second scenario in Section \ref{sectionMAR}.
Finally, we make our conclusions in Section \ref{sec_conclusion}.

\section{Background} \label{sec_BIBDIntro}

\subsection{Block Design Theory}

Block design theory deals with the properties, existence, and construction of systems of finite sets whose intersections have specified numerical properties. 
A block design (BD) is a set system ($V, \mathfrak{B}$). Each element in $V$ is called a point (or treatment), and each 
element in $\mathfrak{B}$ is called a block.

\begin{definition} \label{def_parSet} 
A \textbf{Balanced Incomplete Block Design (BIBD)} with parameters $\lambda$, $k$, $r$, $v$, and $b$ is a block design in which $v$ points are arranged in $b$ blocks, such that each block contains $k$
points,  each point appears in $r$ blocks, and each pair of points appear in exactly $\lambda$ blocks.
It is denoted by $(\lambda,k,r,v,b)$-BIBD. It may also be denoted by $(\lambda, k, v)$-BIBD, as $r$ and $b$ are given by \cite{BD}
\begin{equation} \label{ddl}
r = \frac{\lambda (v-1)}{k-1} , \quad 
b =  \frac{\lambda (v-1) v }{(k-1) k}.
\end{equation}
\end{definition}

\begin{definition} \label{def_sDesign}
A BIBD is a \textbf{$g$-design} if any two blocks intersect in either zero or a fixed number of points $g>0$. 
\end{definition}

A quasi-symmetric design is a BIBD with two possible intersection numbers for any pair of blocks. A $g$-design clearly belongs to the class of quasi-symmetric designs when one intersection number equals to zero. But the term $g$-design is defined here for convenience.  
Some properties and constructions of $g$-designs have been studied in \cite{shrikhande1986survey,qs_0y,qs_02,qs,colbourn2010handbook}. 

Clearly, any $(\lambda, k, r, v, b)$-BIBD with $\lambda=1$ (also referred to as a Steiner system) is a $g$-design with $g=1$. 

Also, any unital design is a $( 1, m+1, m^2, m^3+1, m^2(m^2-m+1) )$-BIBD for some $m$, and is also a $g$-design.

\begin{definition} \label{def_graph}
Let $G$ be a graph. Let $V(G)$ and $E( G)$ be respectively the set of nodes and edges of $G$. 
$G$ is  \textbf{regular} with \textbf{degree} $d$ if every node of $G$ 
has incidence degree $d$. A clique in $G$ is a subset of its vertices such that every two vertices in the subset are connected by an edge; in other words, it is a subgraph of $G$ and is complete.
\end{definition}

\begin{definition} \label{def_srg}
A \textbf{strongly regular graph (SRG)} with parameter set $(b,d,t,u)$ is defined as a regular graph
of size $b$ and degree $d$,
such that
every two adjacent nodes have $t$ common neighbors, 
and every two non-adjacent nodes have $u$ common neighbors.
The SRG is denoted by srg$(b,d,t,u)$.
\end{definition}


\subsection{Some Properties of $g$-Designs} \label{sec_motivation2sDesign}



%
%


The following results are helpful for the future analysis.

\begin{theorem} \label{bridge_s2}
If there exists a $(\lambda, k, v)-$BIBD which is also a $g$-design with $g=2$, 
then there exists a regular graph $ G$ of size $b$ such that 
its edge set $E( G)$ is a disjoint union of $v(v-1)/2$ subsets, where each subset 
 forms a clique of size $\lambda$. 
 Also, $b$ satisfies the equation 
\begin{equation} \label{bridge2_par}
k(k-1) = \frac{\lambda v (v-1)}{b}.
\end{equation}
\end{theorem}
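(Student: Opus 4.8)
The plan is to build $G$ explicitly on the set of blocks and then verify the clique partition, the regularity, and the parameter identity by elementary counting, invoking the $g=2$ hypothesis only at the two points where it is genuinely needed.

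First I would take $V(G)=\mathfrak{B}$, so that $G$ has $b$ vertices. For each of the $\binom{v}{2}=v(v-1)/2$ unordered pairs of points $\{x,y\}$, let $\mathfrak{B}_{x,y}$ be the set of blocks containing both $x$ and $y$; by the BIBD property $|\mathfrak{B}_{x,y}|=\lambda$ (and note $\lambda\ge 2$ here, since $\lambda=1$ would force $g=1$). I declare two distinct blocks adjacent in $G$ exactly when they share at least two points, equivalently when they lie together in some $\mathfrak{B}_{x,y}$. Then each $\mathfrak{B}_{x,y}$ induces a clique of size $\lambda$ in $G$. Because the design is a $g$-design with $g=2$, any two distinct blocks meet in either $0$ or exactly $2$ points, so an edge $\{B,B'\}$ of $G$ corresponds to the unique pair $B\cap B'$; hence that edge lies in the edge set of exactly one of these cliques. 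This shows $E(G)$ is the disjoint union of the $v(v-1)/2$ edge sets of the cliques on the vertex sets $\mathfrak{B}_{x,y}$, each a clique of size $\lambda$, as claimed.

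Next I would check that $G$ is regular. Fix a block $B$; its neighbours are precisely the blocks sharing two points with it. For each of the $\binom{k}{2}$ pairs $\{x,y\}\subseteq B$ there are $\lambda-1$ blocks in $\mathfrak{B}_{x,y}$ other than $B$. No block $B'\neq B$ is counted twice: if $B'$ contained two distinct pairs of points of $B$, then $|B\cap B'|\ge 3$, contradicting $g=2$. Therefore $\deg_G(B)=\binom{k}{2}(\lambda-1)$, independent of $B$, so $G$ is regular. Finally, the identity \eqref{bridge2_par} is immediate from \eqref{ddl}: clearing denominators in $b=\lambda(v-1)v/((k-1)k)$ gives $bk(k-1)=\lambda v(v-1)$, i.e. $k(k-1)=\lambda v(v-1)/b$. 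As a consistency check, the handshake lemma gives $|E(G)|=\tfrac12 b\binom{k}{2}(\lambda-1)$, which equals $\tfrac{v(v-1)}{2}\binom{\lambda}{2}$ after substituting this identity — matching the count from the clique partition.

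The only real subtlety, and the step I would write most carefully, is the double-counting: both the assertion that each edge of $G$ lies in a single clique $\mathfrak{B}_{x,y}$ and the assertion that each neighbour of $B$ is produced by a single pair inside $B$ hinge on the $g=2$ condition ruling out block intersections of size $\ge 3$. Everything else is routine bookkeeping with the standard BIBD relations.
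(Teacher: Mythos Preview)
Your proof is correct and follows essentially the same route as the paper: take the blocks as vertices (this is precisely the paper's ``design graph'', since under $g=2$ sharing at least one point is the same as sharing exactly two), observe that each point-pair determines a $\lambda$-clique, use $g=2$ to ensure these cliques are edge-disjoint and that no neighbour is overcounted, and read off the degree $\binom{k}{2}(\lambda-1)$ and the identity \eqref{bridge2_par} from the standard BIBD relations. Your write-up is in fact a bit more careful than the paper's in flagging exactly where the $g=2$ hypothesis is invoked.
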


\begin{proof}
The proof is given in Appendix \ref{proof_bridge_s2}, as it uses a definition to be introduced in Section \ref{sec_graph_design}.
\end{proof}

Theorem \ref{bridge_s2} provides a necessary condition for the $g=2$ case.
The following result shows the equivalence between $g$-designs with $g=1$ and a class of graphs.

\begin{theorem} \label{bridge_s1}
The existence of a ($\lambda=1, k, v)$-BIBD is equivalent to the existence of a regular graph $ G$
of size $b$
such that  
its edge set $E( G)$ is a disjoint union of $v$ subsets each of 
which forms a clique of size $r$, where  
\begin{equation} \label{bridge1_par}
\frac{rv}{b} = \frac{v-1}{r} + 1 \in \mathbb{N}, 
\end{equation}
 $\mathbb{N}$ is the set of natural numbers. 
\end{theorem}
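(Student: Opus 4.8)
The plan is to prove both implications by passing between a Steiner system and its \emph{block graph}, in the spirit of (and strengthening) Theorem~\ref{bridge_s2}; the point is that for $g=1$ the clique structure is rigid enough to make the correspondence a genuine equivalence rather than a one-way necessary condition.

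\textbf{From a design to a graph.} Suppose we are given a $(\lambda=1,k,v)$-BIBD $(V,\mathfrak{B})$. Since $\lambda=1$ it is a $g$-design with $g=1$, so any two blocks meet in at most one point. Let $G$ have vertex set $\mathfrak{B}$, and for every point $p\in V$ join all pairs of blocks that contain $p$; by (\ref{ddl}) the family $C_p$ of blocks through $p$ has size $r$, so each $C_p$ induces a clique of size $r$, and there are $v$ of them. I would first check that the edge sets $E(C_p)$ are pairwise disjoint: an edge lying in $E(C_p)\cap E(C_q)$ would be a pair of blocks containing both $p$ and $q$, contradicting $|B\cap B'|\le 1$; hence $E(G)$ is the disjoint union of these $v$ cliques of size $r$. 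Next, regularity: a fixed block $B$ has exactly $k$ points, each lying (besides $B$) in $r-1$ further blocks, and these $k$ families are disjoint for the same reason, so $\deg_G(B)=k(r-1)$ for every $B$. Finally, substituting $r=(v-1)/(k-1)$ and $b=v(v-1)/(k(k-1))$ from (\ref{ddl}) shows that both sides of (\ref{bridge1_par}) equal $k\in\mathbb{N}$, so the parameter condition is automatic.

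\textbf{From a graph to a design.} Conversely, suppose $G$ is a regular graph on $b$ vertices whose edge set is a disjoint union of $v$ cliques $C_1,\dots,C_v$, each of size $r$ (assume $r\ge 2$; the case $r=1$ is degenerate, with the ``cliques'' being isolated vertices). Define a set system with point set $\{C_1,\dots,C_v\}$, block set $V(G)$, and incidence ``$C_i$ lies in block $x$'' iff $x\in C_i$; each point $C_i$ then lies in exactly $r$ blocks. For the block size, fix a vertex $x$ and let $m_x$ be the number of cliques through $x$: the $d:=\deg_G(x)$ edges at $x$ are partitioned among these cliques, each contributing $r-1$ of them with no edge shared (disjointness), so $d=m_x(r-1)$. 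Counting $\sum_x\deg_G(x)=2|E(G)|$ with $|E(G)|=v\binom{r}{2}$ gives $d=vr(r-1)/b$, so every block contains $k:=d/(r-1)=vr/b$ points; by hypothesis (\ref{bridge1_par}) this equals $(v-1)/r+1\in\mathbb{N}$.

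\textbf{Forcing $\lambda=1$ — the crux.} The remaining and, I expect, only delicate point is that every pair of points lies in \emph{exactly} one block. Two distinct cliques $C_i,C_j$ cannot share two vertices (the joining edge would lie in both, violating disjointness), so each pair of points lies in \emph{at most} one block; double-counting the incident configurations (vertex $x$, unordered pair of cliques through $x$) then yields $b\binom{k}{2}\le\binom{v}{2}$, with equality exactly when every pair of points lies in some block. But $k=vr/b$ and $k-1=(v-1)/r$ give $bk(k-1)=v(v-1)$, so equality holds; hence each pair of points lies in precisely one block, and (blocks $=V(G)$, points $=$ the cliques) is a $(\lambda=1,k,v)$-BIBD whose $r,b$ agree with (\ref{ddl}). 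The heart of the argument is thus the observation that the arithmetic identity (\ref{bridge1_par}) is exactly what makes this counting bound tight; everything else is bookkeeping about when an edge can belong to two of the prescribed cliques.
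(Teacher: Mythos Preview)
Your proof is correct and follows essentially the same route as the paper: the design-to-graph direction is identical (the design graph with point-induced $r$-cliques), and the graph-to-design direction in the paper builds the same dual incidence structure---phrased there as applying the MAR algorithm to the $v$ cliques, which amounts exactly to your ``points $=$ cliques, blocks $=$ vertices'' definition---and then forces $\lambda=1$ by the same double count $b\binom{k}{2}=\binom{v}{2}$ against the ``at most one'' bound coming from edge-disjointness. The only difference is cosmetic: the paper's detour through MAR ties the theorem to its algorithmic framework, whereas you define the incidence directly.
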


\begin{proof}
The proof is given in Appendix \ref{proof_bridge_s1}, as it uses a technique to be introduced in Section \ref{sectionMAR}.
\end{proof}

\begin{lemma} (\cite{sane1987quasi}, Lemma 2.1)

If a $(\lambda, k, v)$-BIBD is a $g$-design, then its design graph is a strongly regular graph, denoted by srg$(b,d,t,u)$,
When $\lambda  = 1$, we have
\begin{align}
d &= \frac{v-k}{k-1}k, \quad t =\frac{v-1}{k-1}-2+(k-1)^2, \quad u = k^2. \label{newd37} 
\end{align}
\end{lemma}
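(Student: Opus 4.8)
The plan is to prove the statement by a sequence of double-counting arguments on pairs and triples of blocks, and then to specialize to $\lambda=1$. Recall that the design graph $G$ of a $g$-design has the $b$ blocks as vertices, with two blocks adjacent precisely when they meet in $g$ points and non-adjacent when they are disjoint, since in a $g$-design the only two block-intersection numbers are $0$ and $g$. Our task is thus (a) to show that this $G$ is strongly regular, and (b) to evaluate $d,t,u$ when $\lambda=1$.

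First I would establish regularity. Fixing a block $B$, let $n$ be the number of blocks meeting $B$ in $g$ points. Counting flags $(p,B')$ with $p\in B\cap B'$ and $B'\neq B$ gives $gn=\sum_{p\in B}(r-1)=k(r-1)$, so $n=k(r-1)/g$ is independent of $B$, and $G$ is regular with $d=k(r-1)/g$. (Counting pairs $\{p,q\}\subseteq B\cap B'$ instead yields the companion identity $\binom g2\,n=\binom k2(\lambda-1)$ and hence the consistency relation $(g-1)(r-1)=(k-1)(\lambda-1)$; this is not needed for the parameters below, but it confirms that $g=1$ whenever $\lambda=1$.) Then, for two adjacent blocks and for two non-adjacent blocks, I would count common neighbours by pushing the same flag-counting idea up to triples of blocks, using the pairwise-balance property to control the intersection pattern of a third block with the fixed pair. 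This is the step I expect to be the main obstacle in full generality: the inclusion–exclusion over the possible patterns, and the verification that the resulting count is the same for every adjacent (resp. non-adjacent) pair, is where quasi-symmetry is really used, and it is in essence the classical Goethals--Seidel argument that the block graph of a quasi-symmetric design is strongly regular.

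Finally I would specialize to $\lambda=1$, i.e.\ to a Steiner system $S(2,k,v)$, where the counts become transparent because any two distinct points lie in a unique block; hence two distinct blocks meet in $0$ or $1$ point, $g=1$, and $r=(v-1)/(k-1)$. For the degree, each of the $k$ points of a block $B$ lies in $r-1$ further blocks, and these $k(r-1)$ blocks are pairwise distinct (a common block would contain two points of $B$ and so equal $B$), so $d=k(r-1)=k(v-k)/(k-1)$. For two disjoint blocks $B,B'$, each pair $(p,q)\in B\times B'$ spans a unique block, which one checks meets $B$ exactly in $p$ and $B'$ exactly in $q$; this is a bijection onto the common neighbours, so $u=k^2$. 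For two blocks meeting in a point $p$, the common neighbours split into the $r-2$ further blocks through $p$ and, for each pair $(a,b)\in(B\setminus\{p\})\times(B'\setminus\{p\})$, the unique block through $a$ and $b$ (which avoids $p$ and meets each of $B,B'$ in a single point), giving $t=(r-2)+(k-1)^2=(v-1)/(k-1)-2+(k-1)^2$, as claimed.
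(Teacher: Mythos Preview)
Your argument is correct. The paper does not supply its own proof of this lemma; it simply cites it from \cite{sane1987quasi}, so there is no approach to compare against beyond noting that you have reconstructed the standard block-graph argument. Your derivation of the $\lambda=1$ parameters is complete and accurate: the degree count $d=k(r-1)=k(v-k)/(k-1)$ is the same one the paper uses implicitly elsewhere (e.g.\ in equation~(\ref{DCC}) and in the proof of Theorem~\ref{bridge_s1}), and your bijective counts for $t$ and $u$ via the unique block through a pair of points are the canonical ones for Steiner $2$-designs.

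One minor remark: in the general-$g$ portion you correctly flag the triple-block inclusion--exclusion as the nontrivial step and defer to the Goethals--Seidel argument. That is appropriate here, since the paper itself only invokes the $\lambda=1$ formulas in what follows, and the general statement is quoted purely as background. If you wanted to close that gap self-containedly, the cleanest route is to count, for a fixed block $B$, the quantities $\sum_{B'\neq B}|B\cap B'|$ and $\sum_{B'\neq B}|B\cap B'|^2$ (the first via point replication, the second via pair replication), which together with the two-valuedness of $|B\cap B'|$ pin down the number of blocks at each intersection size; the analogous count relative to a fixed pair $B,B''$ then yields $t$ and $u$.
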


\section{Graph-Based KPS and Evaluation Metrics} \label{sec_graph_design}

Block designs are intimately related to key pre-distribution schemes. In a KPS system, each sensor node is assigned a set of keys, called ``key ring''.  Let a key correspond to a point, and a key ring correspond to a block.
For example, A unital design-based KPS gives $m^2(m^2 -m+1)$ key rings from a key pool of $m^3+1$ keys, such that each key ring contains $m+1$ keys.

In sensor network applications, if two key rings share at least one common key, the corresponding two nodes can be directly connected to each other. The direct connections form a graph, whose nodes correspond to the sensor nodes, and edges correspond to direct connections.  This graph is referred to as the ``design graph''.  In mathematical terms:

 \begin{definition} \label{def_targetGraph} 
 A \textbf{design graph} for a specific block design is a graph $ G_D$ whose nodes $V( G_D)$ correspond 
 to the blocks. Two nodes are connected in $ G_D$ if and only if the corresponding two blocks share at least one point. 
 \end{definition}
 
 The prior structural information of a network (discussed) may also be modeled as a graph:
 \begin{definition} \label{def_targetGraph} 
A \textbf{target graph} for a specific WSN is a triplet of graphs $ G_T = ( G_T^c,  G_T^u,  G_T^r)$ that satisfies
 \begin{enumerate}
 \item each node of $ G_T^c$, $ G_T^u$, or $ G_T^r$ corresponds to a node in the WSN,
 \item two nodes are connected in $ G_T^c$ if and only if the corresponding nodes in the WSN must directly communicate,
 \item two nodes are connected in $ G_T^u$ if and only if the corresponding nodes in the WSN are required not to directly communicate, and
 \item for any pair of nodes not covered by the above two cases, they are not connected in $ G_T^r$ if and only if the corresponding nodes in the WSN 
 may communicate via a path but not necessarily communicate directly.
 \end{enumerate}
 \end{definition}
 
We note that design graphs and target graphs are undirected and unweighted.
Also in the classical case,  $ G_T^c,  G_T^u$ are null graphs (denoted by $\emptyset$) and $ G_T^r$ is a complete graph,
i.e. $ G_T = (\emptyset, \emptyset,  G_{\textrm{complete}} )$ where $G_{\textrm{complete}}$ denotes the complete graph.

For a given WSN, assume that  we are given a target graph $ G_T$ that reflects the available prior information.
A natural question is how to incorporate $G_T$ into the classical evaluation metrics, in order to address
 practical concerns (e.g. Observations 1-4). 

First, we briefly review the classical performance metrics in terms of a design graph $G_D$.


\begin{itemize}

\item {\bf Direct connectivity coverage}  
is the fraction of  direct links to all possible links in the network, i.e. the probability that a given pair of nodes can directly connect.

\item {\bf Average path length} 
is the expectation of the length of the shortest path between two nodes drawn uniformly from the network.
It can be calculated as the average length of the shortest paths between pairs of nodes in $G_D$.
It is defined to be infinite ($\infty$) if there exist two nodes that cannot establish a connection path.

\item {\bf Network resiliency} 
$NR_x$ measures the fraction of uncompromised external links when $x$ sensor nodes are captured. It can be calculated as the fraction of edges that do not contain keys employed in the compromised nodes.

\item {\bf Storage overhead} 
measures the memory required to store the keys in each node, often calculated as the size of each block.

\item {\bf Network scalability}  
is the total number of keys needed, for a given number of nodes.
\end{itemize}

Consider a target graph $ G_T$. Some of the above criteria need to be modified accordingly.


\begin{itemize}
\item {\bf Direct connectivity coverage (DCC) and average path length (APL)} 

If two nodes do not communicate (or are not connected in $ G_T^c  \cup  G_T^r$), whether they share keys should not be 
considered into the evaluation of a KPS (Observation 1 and Observation 3).  
We therefore restrict the calculations of the two metrics
to the edge set $E( G_D) \cap  E( G_T^c  \cup  G_T^r) $, 
e.g. only consider the edges in $ G_D$ that also appear in $ G_T^c  \cup  G_T^r$. 

\item {\bf Direct important connectivity coverage (DICC)} 

Direct important connectivity coverage can be calculated as the direct connectivity coverage restricted 
to $ G_T^c$, i.e. $  | E( G_T^c) \cap E( G_D)  | /  | E( G_T^c) | $ with $|\cdot|$ representing the cardinality of a set.
This metric is meaningful only when  $ G_T^c$ is not empty.

\item {\bf Network resiliency (NR)} 

Observation 4 provides a scenario where certain nodes are required not to communicate.
This is represented by the edges of $ G_T^u$. 
Reflected in the metric of network resiliency, if two compromised nodes are connected in both $ G_D$ and $ G_T^u$,
the common keys they share are regarded as captured. Thus, $NR_x$ can be calculated as 
the fraction of of edges that do not contain keys that are
employed by edges in $E( G_D) \cap E( G_T^u)$ or the $x$ compromised nodes.
It reduces to the classical case when $ G_T^u$ is a null graph.  


\end{itemize}


\begin{definition} \label{def_gdesign} 
A KPS is \textbf{graph-based}, if it is designed based on the target graph. Its performance is evaluated based on the
metrics above.
\end{definition}

Since it is not easy to provide a universal design that is suitable for any situation, 
we focus on the following two different cases of graph-based KPS.

\begin{scenario} \label{example1}
Consider the case when $G_T = (\emptyset, \emptyset, G_T^r)$, i.e.  every two nodes may or may not communicate. 
	Notice that in this scenario, we need the modified definition of ``direct connectivity coverage'' and ``average path length''.
\end{scenario}
	
\begin{scenario} \label{example2}
Consider the case when $ G_T = ( G_T^c, G_T^u, \emptyset )$, i.e. $G_T^c \cup G_T^u$ is a complete graph, and two nodes either must communicate or are required not to communicate.
	Notice that in this scenario, we need the modified definition of ``network resiliency'', and the metric ``direct important connectivity coverage''.
\end{scenario}

 \section{The $G_T = (\emptyset, \emptyset, G_T^r)$ scenario} \label{sec_example_can}

In this scenario, $G_T^r$ contains the information about pairs of nodes that do not need direct connections. 
 If $G_T^r$ is non-trivial, i.e. $G_T^r$ is not a complete graph,  
 we may improve the performance
 by employing the extra information provided by $G_T^r$.
 
For comparison, we first consider the trivial case in which $G_T^r$ is a complete graph.

\vspace{0.3cm}

(1) \textit{When $G_T^r$ is a complete graph},
 Bechkit et al. \cite{unital} propose a highly scalable KPS using unital design. This 
was shown to outperform other KPS in many aspects. 
Here, we examine a more general case. We use a $(\lambda, k, r, v, b)$-BIBD with $\lambda=1$
for KPS design,
and evaluate the KPS parameters as follows (in terms of $v$ and $k$). 

\vspace{0.3cm}

\begin{itemize}
\item \emph{Direct connectivity coverage / Direct important connectivity coverage}:   
\begin{align} \label{DCC}
DCC &= \frac{bd/2}{b(b-1)/2} = \frac{d}{b-1}  = \frac{\frac{v-k}{k-1} k }{\frac{v(v-1)}{k(k-1)} -1 } \nonumber \\
&=\frac{(v-k) k^2 }{v(v-1)-k(k-1) }.
\end{align}

\item \emph{Average path length}: If two nodes are not connected, there are $u>1$ nodes connecting both of them, so
the minimum path length between these two nodes is equal to two. The average path length is thus  
\begin{equation} \label{APL}
APL = DCC + 2(1-DCC) = 2 - DCC.
\end{equation}

\item \emph{Network resiliency}: For approximate analysis, we assume that the captured nodes are uniformly distributed among all the nodes. Since each key occurs in $r$ blocks among the total number of $b$ blocks, the probability that a key is not compromised when $x$ nodes are captured is 
$
{ b - r \choose x} / {b \choose x},
$ where $b = \frac{v(v-1)}{k(k-1)}$.
Further, the probability that a given link is compromised is 
\begin{equation} \label{NR}
NR_{x }=  \frac{ { b - r \choose x} }{ {b \choose x} }.
\end{equation}

\item \emph{Storage overhead}: 
It is the size of each block, i.e. $SO = k.$

\item \emph{Network scalability}:
It is the total number of keys, i.e. $NS = v.$
\end{itemize}

\vspace{0.3cm}

(2) \textit{When $G_T^r$ is not a complete graph},
in order to exploit the graph information, let us consider a network in which
there are $s$ groups and each of the group contains $b_0$ sensor nodes. 
For each group, there are $\tau_0$ ($0 \leq \tau_0 \leq b_0/s$) ``central nodes'' (or nodes of higher rank) who are responsible  
for collecting information from all the other nodes in the same group. Besides this, between any two groups only the 
central nodes could establish connections; in other words, a ``non-central node'' can only communicate with the nodes
within the same group. 
In terms of a target graph, $G_T^r$ is isomorphic to the following matrix $[ J_{mn} ]_{sb_0 \times sb_0}$
(two graphs are isomorphic if the vertices are the same up to a relabeling):
\begin{align}
&J_{mn}  = 1 , \quad \forall 0 \leq (m \textrm{ mod } b_0), \, (n \textrm{ mod } b_0) < \tau_0; \label{line1} \\
&J_{mn} = 1, \quad \forall \left \lfloor \frac{m}{b_0} \right \rfloor = \left \lfloor \frac{n}{b_0} \right \rfloor; \label{line2} \\
&J_{mn} = 0, \quad \textrm{ otherwise. } \label{line3}
\end{align}
Here, $\lfloor x \rfloor$ denotes the largest integer that is no more than the real number $x$.
Equations (\ref{line1}) and (\ref{line2}) represent the possible connections among all the central nodes and among the nodes in each group, respectively. 

Our goal is to design a KPS such that any (central or non-central) node could transfer its information 
to any other node in the network, while satisfying the required performance metrics. 
A possible graph-based KPS design is to assign keys, e.g. via BIBD with $\lambda=1$,
for each of the $g$ groups, 
together with the group of all central nodes.
We now evaluate the network performance of this graph-based design, and compare it with the classical 
way. 

Let $v_0$ and $(s+1)v_0$ be respectively the size of the key pool for each group of the graph-based KPS and
for the classical KPS. Let $b = sb_0, v = (s+1)v_0$. 
Then we compute:

\begin{itemize}
\item \emph{Direct connectivity coverage / Direct important connectivity coverage}:   

For classical KPS, from $b = \frac{v(v-1)}{k(k-1)}$ we obtain $k = O(\sqrt{\frac{v(v-1)}{b}}) = O(vb^{-\frac{1}{2}})$,
\footnote{$O(\cdot)$ notation: $f = O(g)$ means there exists a positive constant $c$ such that $c^{-1}g < f < cg$.}
 and from Equation (\ref{DCC}) we  further obtain 
\begin{align}
DCC = O(\frac{vk^2}{v^2}) = O(vb^{-1}) = O(v_0 b_0^{-1}).  \label{DCC_old}
\end{align}
For approximate analysis, 
we assume that the edges of unital design are uniformly distributed in $E(G_T^r)$ and $ E^{'}(G_T^r) $ (the complement
of $E(G_T^r)$). This implies that
$$\frac{ | E(G_T^r) \cap E(G_D)  | }{ | E(G_T^r) | } = \frac{ | E(G_D)  | }{ {b \choose 2} }.$$

For graph-based KPS,  the direct connectivity coverage within one group is
$ O(v_0 b_0^{-1}),$
and within the central nodes is 
$O(v_0 (s\tau_0)^{-1}) \geq O(v_0 b_0^{-1}) $ 
using the same reasoning.
So the overall DCC is at least 
\begin{align}
DCC_G = O(v_0 b_0^{-1}), \label{DCC_new}
\end{align}
which is as good as Equation (\ref{DCC_old}).

\item \emph{Average path length}: 
The given target graph requires that any path across two groups consists of two types of connections:
normal node with central node, and central node to central node. So we consider them separately.

For classical KPS, within a group or among central nodes, the average path length is 
less or equal to
\begin{align}
APL = DCC + 2(1-DCC) = 2 - DCC. \label{APL_old}
\end{align}

For graph-based KPS, connections within any group form a SRG, so the average path length is  
\begin{align}
APL_G  = DCC_G + 2(1-DCC_G) 
\approx APL. \label{APL_new}
\end{align}

\item \emph{Network resiliency}: 
We assume that the captured nodes are uniformly distributed among all the nodes. For classical KPS, each key occurs in $r$ blocks (from the total number of $b$ blocks), and thus the probability that a key is not compromised when $x$ nodes are captured is
\begin{align} 
NR_{x}=  \frac{ { b - r \choose x} }{ {b \choose x} } =  \frac{ { b - O(b^{\frac{1}{2}}) \choose x} }{ {b \choose x} }, \label{NR_old}
\end{align}
where we have applied $r = \frac{v-1}{k-1} = O(vk^{-1}) = O(b^{\frac{1}{2}}) $. 
This is also the probability that a given link is not compromised.

For graph-based KPS, 
each key occurs in $r_0 = O(b_0^{\frac{1}{2}})$ blocks (from the total number of $b$ blocks), and thus the probability that a key is not compromised when $x$ nodes are captured is 
$
{ b - r_0 \choose x} / {b \choose x}.
$
Further, the probability that a given link within any group is not compromised is 
\begin{align}
NR_{Gx}=  \frac{ { b - r_0 \choose x} }{ {b \choose x} } 
		= \frac{ { b - s^{-\frac{1}{2}}  O(b^{\frac{1}{2}} ) \choose x} }{ {b \choose x} }; \label{NR_new}
\end{align}
The resiliency for the central nodes $\geq$ $NR_{Gx}$, because $s\tau_0 \leq b_0$. As a result, the resiliency in 
Equation (\ref{NR_new}) is greater than that in (\ref{NR_old}).

\item \emph{Storage overhead}: 
For classical KPS,
storage overhead is given by 
\begin{align} 
SO = k = O(vb^{-\frac{1}{2}}) = s^{\frac{1}{2}} O(v_0 b_0^{-\frac{1}{2}} ) . \label{SO_old}
\end{align}

For graph-based KPS, 
this is equal to $ k_0 = O(v_0 b_0^{-\frac{1}{2}}) $ for a normal node, and $2k_0 $ for a central node, 
so in average:
\begin{align}
SO_G = O(v_0 b_0^{-\frac{1}{2}}) = s^{-\frac{1}{2}} SO .  \label{SO_new}
\end{align}

\end{itemize}

In summary, under the same network scalability,  
the direct connectivity coverage and average path length of graph-based KPS
are no worse than those of the classical ones, while the 
network resiliency and storage overhead are comparatively improved. Thus, the overall performance is improved.

\section{The $G_T = (G_T^c, G_T^u, \emptyset)$ scenario} \label{sec_example_should}

In this scenario, we want to have a KPS whose design graph is exactly $G_T^c$. 
We first consider the case where $G_T^c$ is (by coincidence) the design graph of certain $g$-design with $g=2$ and parameters $(\lambda, k, r, v, b)$.
We start with such simple and ``ideal'' case for two main reasons: first, it provides a benchmark for a more general-purposed approach to be introduced in the next section; second, designs for more complex target graphs may be derived based upon the ideal ones. 

We immediately obtain a satisfying KPS by the natural mapping between blocks of the $g$-design and key rings. We refer to it as the ``natural'' method.

The storage overhead
is $k$.
The scalability
is $v$.
As for the network resiliency, 
if one node (block) is captured, there will be ${\lambda \choose 2} {k \choose 2}$ connections compromised. 
To observe this, we first notice 
that any pair of points in the block appear in $\lambda$ blocks, and there are ${k \choose 2}$ such pairs,
so ${\lambda \choose 2} {k \choose 2}$ connections are compromised. Moreover, any other connection is secure,
since the corresponding two blocks share at least one key that is not captured.

\begin{example} \label{example_cute}

Let $G_T^c$ be the graph
shown in Figure \ref{14pointGraph}(a). As we can see, the nodes of $G_T^c$ could be grouped into seven disconnected pairs
with all other edges connected in the graph. For clarity, the complement of $G_T^c$ is 
given in Figure \ref{14pointGraph}(b). 
Assume we would like to construct a KPS whose design graph is $G_T^c$. 

\begin{figure}[h]
\centering
\subfloat[]{\includegraphics[width=1.5in]{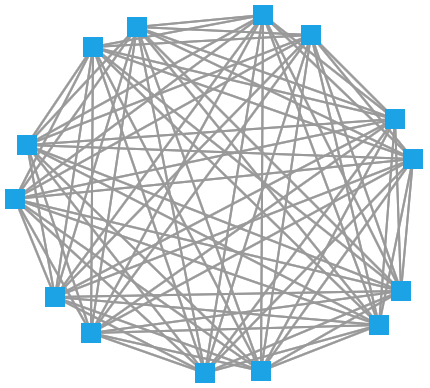}%
\label{14point}}
\hfil
\subfloat[]{\includegraphics[width=1.6in]{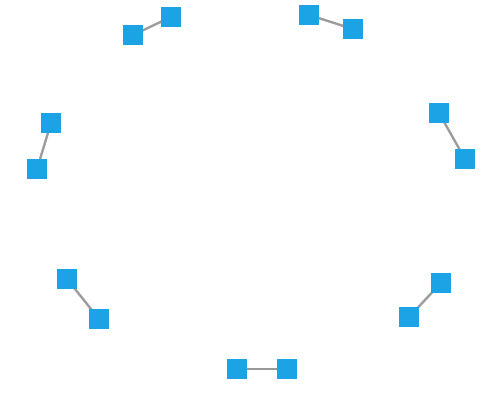}%
\label{14pointC}}
\caption{Graph $G_T^c$ and its complement}
\label{14pointGraph}
\end{figure}

We construct the following $g$-design with $g=2$ and parameters 
$(\lambda, k, r, v, b) = (3, 4, 7, 8, 14)$:
\begin{align} 
V =& \{a_1 a_2 a_3 a_4 a_5 a_6 a_7 a_8\} , \label{Stanton_V} \\
\mathfrak{B} =& 
\{\{a_1 a_2 a_3 a_4\},\{ a_1 a_2 a_5 a_6 \},\{ a_1 a_2  a_7 a_8 \},\{ a_1 a_2  a_6 a_8 \}, \nonumber \\
&\{ a_1 a_2  a_5 a_7 \},\{ a_1 a_4  a_5 a_8 \},\{ a_1 a_4  a_6 a_7 \}, \nonumber \\
&\{ a_5 a_6 a_7 a_8\},\{ a_3 a_4  a_7 a_8 \},\{ a_3 a_4  a_5 a_6 \},\{ a_2 a_4  a_5 a_7 \}, \nonumber \\
&\{ a_2 a_4  a_6 a_8 \},\{ a_2 a_3  a_6 a_7 \},\{ a_2 a_3  a_5 a_8 \} \}. \label{Stanton_B}
\end{align}

Obviously ($V, \mathfrak{B}$) forms a $g$-design ($g=2$) whose design graph is $G_T^c$.
In other words, we have obtained a satisfying KPS with $a_i$, $i=1,\cdots,8$, representing the keys. 

Next we evaluate the performance of this construction by computing resiliency, storage overhead,  and scalability measures.

We consider the simple case when one node is captured, say $V_1 = \{a_1 a_2 a_3 a_4\}$. In this case, there are $18$ connections compromised. To observe this,  we first notice 
that any two points of $V_1$ appear in exactly $3$ blocks; thus, there are ${3 \choose 2} {4 \choose 2} =18$ compromised connections in total. Besides this, if 
both points of the intersection of two blocks do not belong to $V_1$, the two blocks are able to communicate in a secure way. 

The storage overhead is the size of each block, i.e. $k=4$.
The network scalability
is the total number of points, i.e. $v=8$.
\end{example}

Now the question is: 
for any given target graph $G_T = (G_T^c, G_T^u, \emptyset)$, does there exist a KPS whose design graph $G_D$ is exactly $G_T^c$? 
In fact, we have a positive answer that is guaranteed by the following algorithm.

\section{Matching and Reducing Algorithm (MAR)} \label{sectionMAR}

A schematic diagram of the Matching and Reducing Algorithm (MAR) is depicted in Table~\ref{MAR}.

\begin{table}[t]
\renewcommand{\arraystretch}{1.4}
\renewcommand{\arraystretch}{1.2}
\caption{Matching and Reducing Algorithm}
\begin{center}
\begin{tabular}{l}
\toprule[1pt] {\bf Input:}\label{MAR}

\hspace{0.5em} $G_T^c=[V(G_T^c), E(G_T^c)]$,  positive integer $c_0$. \\
\hspace{3.5em} Let $b=|V(G_T)|, e=|E(G_T)|$. \\
{\bf Initialization:} \\
\hspace{1.5em} $l=0,  G^l = G_T^c$. \\
\hspace{1.5em} Generate $e$ different keys $K = \{k_1,\cdots,k_e\}$.\\
\hspace{1.5em} Assign $K$ to $E(G^0)$, s.t. each edge is assigned a unique key.\\
\hspace{1.5em} Define $\mathfrak{B} = \{ B_1,\cdots,B_b \}$, where\\
\hspace{1.5em} $B_n = \{k_t \mid k_t \textrm{ is assigned to an edge which is incident to node } n \}$.\\

{\bf Repeat} (Clique reduction procedure)\\
\hspace{1.5em}$l=l+1$;\\
\hspace{1.5em}Find a clique $ C^l$ in $ G^{l-1}$ whose size is no larger than $c_0$;\\
\hspace{1.5em}Denote $K_{ G^{l-1}}$ as the set of keys assigned to $E( C^l)$;\\
\hspace{1.5em}Update the blocks:\\
\hspace{3.5em}$B_m = B_m - K_{ G^{l-1}}, \forall m \in V( C^l)$;\\
\hspace{1.5em}Arbitrarily choose a key $k^l$ from $K_{ G^{l-1}}$:\\
\hspace{3.5em}$B_m = B_m \cup \{k^l\}, \forall m \in V( C^l);$\\
\hspace{1.5em}Update from $ G^{l-1}$ to $ G^{l}$:\\
\hspace{3.5em}$E( G^l) = E( G^{l-1}) - E( C^l)$;\\

{\bf Until} 
\hspace{0.5em} no clique of size greater than $2$ can be found.\\

{\bf Output: } 
\hspace{0.5em} $\mathfrak{B}, V=\cup_{m=1}^b B_m.$\\
\toprule[1pt] 
\end{tabular}
\end{center}
\end{table}

In the initialization step, a unique key is assigned to each edge, i.e.  two nodes
of that edge contain the key.  Thus, the key ring of each node has been determined. 
However, we are motived to reduce the size of the key pool and key rings. Notice that 
for any clique $ C$ in $ G_T^c$, it will not change the design graph if the distinct keys
assigned to $E( C )$ are replaced by only one key; in other words, if all the nodes of $ C$ share a common key,
they still directly connect with one another, and edges outside $ C$ are not affected. 
Therefore,  MAR looks for cliques in the current target graph, and reduces the number of 
keys within each clique to one, and then updates the target graph by removing the clique. 
The size of the clique, however, is upper-bounded by a constant integer $c_0$ to ensure a reasonable network resiliency.
Consider, for example, if $ G_T^c$ is a complete graph, one key is enough for full connectivity; however, the 
the whole network is compromised as long as any one node is captured. 

\begin{figure}[h]
\centering
\includegraphics[width=2.5in]{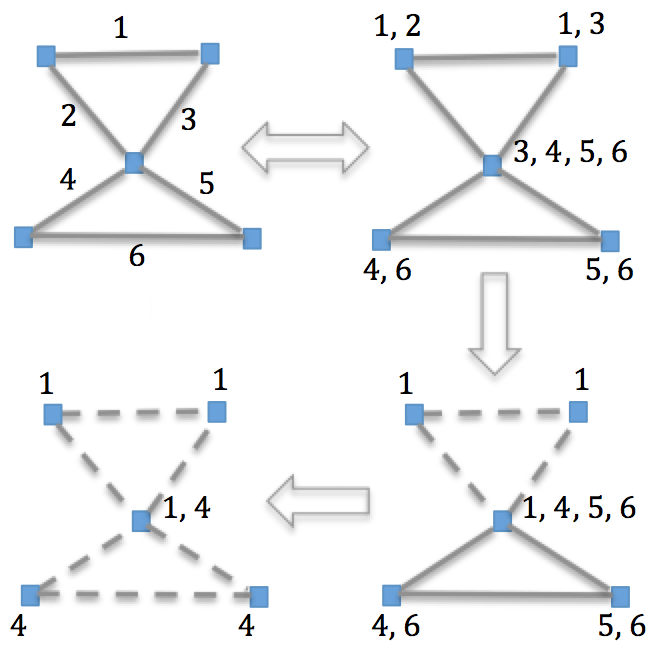}%
\caption{An illustrating example where MAR with $c_0=3$ is applied to a simple graph}
\label{MAR_example}
\end{figure}

An illustrating example is given in Figure \ref{MAR_example}.

We elaborate on the other aspects of this algorithm:
\begin{enumerate}
\item We did not give the details of how to detect cliques, or the optimality criteria for clique selection. 
Indeed, MAR encompasses many more algorithms. For example, one may
 propose a specific algorithm that deviates from MAR, in order to minimize the total number of keys ($\min |V|$), etc.

\item It is interesting that MAR is applied as a technical tool to the proof of Theorem \ref{bridge_s1} (in Appendix \ref{proof_bridge_s1}).

\item The upper bound of the clique size in MAR is designed to ensure network resiliency $NR_x$. 
But is there any theoretical lower bound of $NR_x$ if the clique size is bounded by $c_0$?  
The following theorem gives a positive answer.   
\end{enumerate}

\subsection{Network Resiliency for MAR}

\begin{theorem} \label{Rx_bound} 
For a given graph $ G_T^c$, assume that the degree of any node is no larger than $d$, the 
network resiliency $NR_x$ of the KPS determined by the Matching and Reducing Algorithm satisfies 
\begin{equation} \label{Rx_LB}
 NR_x \geq 1 -  \frac{ x {c_0 \choose 2} \lceil \frac{d}{c_0-1} \rceil}{ |E( G_T^c)| }, 
\end{equation}
where $\lceil \frac{d}{c_0-1}  \rceil$ denotes the smallest integer that is no less than $\frac{d}{c_0-1} $.
\end{theorem}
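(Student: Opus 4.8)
The plan is to bound the number of links that can be compromised when $x$ nodes are captured, and then translate that into a lower bound on $NR_x$. Recall that in MAR each surviving key of the final key pool is either a ``singleton'' key (originally assigned to a single edge that was never absorbed into a clique) or a ``clique'' key (the representative $k^l$ chosen for some clique $C^l$ of size at most $c_0$). Capturing a node $n$ exposes exactly the keys in its final block $B_n$; a link (edge of $G_D = G_T^c$) is compromised precisely when the pair of nodes it joins both hold an exposed key. So the first step is: \emph{for a single captured node $n$, count how many links can become insecure because of the keys in $B_n$.}

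Second, I would argue that each key held by $n$ can ``kill'' at most $\binom{c_0}{2}$ links. A singleton key sits on just one edge, so it accounts for at most one link. A clique key $k^l$ is shared by the $\le c_0$ vertices of the clique $C^l$, so the links it can compromise are among the $\le \binom{c_0}{2}$ edges internal to that clique. Hence every exposed key is responsible for at most $\binom{c_0}{2}$ compromised links, and it remains to bound $|B_n|$, the number of keys node $n$ holds in the final design. Since $n$ has degree at most $d$ in $G_T^c$, initially $B_n$ has at most $d$ keys. The clique‑reduction steps only ever merge keys incident to $n$ (replacing the keys on $E(C^l)$ by one key, for cliques through $n$), so $|B_n|$ can only decrease; thus $|B_n| \le d$ throughout. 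To get the sharper $\lceil d/(c_0-1)\rceil$ I would observe that each edge of $G_T^c$ incident to $n$ is, in the final design, covered by exactly one of $n$'s final keys, and a clique of size $\le c_0$ through $n$ uses $\le c_0-1$ of those incident edges, while a singleton key uses exactly one; so the $\le d$ incident edges are partitioned among the keys of $B_n$ with at most $c_0-1$ edges per key, forcing $|B_n| \le \lceil d/(c_0-1)\rceil$.

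Third, combining: capturing one node compromises at most $\binom{c_0}{2}\lceil d/(c_0-1)\rceil$ links, so capturing $x$ nodes compromises at most $x\binom{c_0}{2}\lceil d/(c_0-1)\rceil$ links (a union bound over the captured nodes). Dividing by the total number of links $|E(G_T^c)|$ and subtracting from $1$ yields
\begin{equation}
NR_x \ge 1 - \frac{x\binom{c_0}{2}\lceil \frac{d}{c_0-1}\rceil}{|E(G_T^c)|},
\end{equation}
which is the claimed bound.

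The main obstacle is the careful justification of the per-node key count, i.e. that $|B_n| \le \lceil d/(c_0-1)\rceil$ rather than merely $\le d$. This requires the invariant that at every stage of MAR, the current edge set $E(G^l)$ together with the ``already-absorbed'' keys induces a partition of the original incident edges of $n$ into groups each of size at most $c_0-1$ (one group per key in $B_n$), and that singleton keys correspond to groups of size one. One must check this invariant is preserved by the update step — when a clique $C^l$ through $n$ is processed, the $\le c_0-1$ edges of $E(C^l)$ incident to $n$ are exactly the edges whose keys get merged into $k^l$, so they form one new group of size $\le c_0-1$, while groups for vertices not in $C^l$ are untouched. Everything else (the $\binom{c_0}{2}$ factor and the union bound over captured nodes) is routine.
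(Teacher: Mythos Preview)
There is a genuine gap in your second step: the inequality $|B_n| \le \lceil d/(c_0-1)\rceil$ is false, and your own argument actually proves the reverse inequality. You correctly observe that the (at most $d$) edges of $G_T^c$ incident to $n$ are partitioned among the final keys of $B_n$, with at most $c_0-1$ incident edges per key. But partitioning $\le d$ objects into parts of size $\le c_0-1$ forces \emph{at least} $\lceil d/(c_0-1)\rceil$ parts, not at most. Concretely, if the neighbourhood of $n$ is an independent set, no clique of size $\ge 3$ through $n$ exists, MAR leaves every incident edge as its own singleton key, and $|B_n|=\deg(n)$, which can equal $d$. Feeding the correct bound $|B_n|\le d$ into your product estimate then gives only $d\binom{c_0}{2}$ compromised links per captured node, which is strictly weaker than the stated bound.

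The missing idea is that ``many keys'' and ``big cliques'' cannot happen simultaneously; you must trade them off rather than bound them separately. The paper's proof does exactly this: with $\lambda_m$ the size of the $m$-th clique through the captured node and $y_m=\lambda_m-1$, the constraints are $\sum_m y_m\le d$ and $1\le y_m\le c_0-1$, and the number of compromised links is $F=\sum_m\binom{y_m+1}{2}$. Because $y\mapsto\binom{y+1}{2}$ is convex, $F$ is maximized at an extreme point of the feasible region, namely by packing as many $y_m$'s equal to $c_0-1$ as possible; this \emph{simultaneously} fixes the number of terms to $\lceil d/(c_0-1)\rceil$ and each term to $\binom{c_0}{2}$, giving $F\le\binom{c_0}{2}\lceil d/(c_0-1)\rceil$. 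Your union bound over $x$ captured nodes and the division by $|E(G_T^c)|$ are fine once this per-node bound is established.
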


\begin{proof}

Let us consider an arbitrary node $x$. Each key that is assigned to $x$ is associated with one clique in $G_T^c$, due to the clique reduction procedure of the MAR algorithm (an edge can be regarded as a clique of size $2$). Assume that $x$ is associated with $M$ cliques, and that clique $m$ is of size $\lambda_m, m=1,\cdots,M$.  
 Let $y_m = \lambda_m - 1$. It is clear that
\begin{equation} \label{bound_constraint}
\sum_{m=1}^M y_m \leq d, \quad 1 \leq y_m \leq c_0 - 1. \quad (\textrm{$c_0$ is given in the algorithm})
\end{equation}
%
Moreover, the number of compromised links when node $x$ is captured is:
$$ F(y_1, \cdots, y_m) = \sum_{m=1}^M {y_m + 1 \choose 2}.$$
We now evaluate the maximum for $F$ under constraint (\ref{bound_constraint}). 

Let $f(y) = {y + 1 \choose 2}$. Given positive numbers $a,b,s$ in such a way that $s-a >0, s-b >0$,
 it is easy to observe that $f(a)+f(s-a) < f(b) + f(s-b)$
if and only if $|a-\frac{s}{2}| < |b - \frac{s}{2}|$.

Given two positive variables $y_1$ and $y_2 $  satisfying 
$y_1 + y_2\leq c_0 - 1$, we have
$$
| y_1 - \frac{y_1 + y_2}{2} | < | (y_1 + y_2) - \frac{y_1 + y_2}{2}|.
$$
We conclude that 
\begin{equation} \label{new_Ineq1}
F(y_1, y_2, \cdots,  y_m) < F(y_1 + y_2, \cdots, y_m).
\end{equation}

Moreover, given two positive variables $y_1$ and $y_2$ satisfying
$y_1 + y_2> c_0 - 1, \,  y_1 \leq y_2< c_0 - 1$, we have 
$$
| y_2- \frac{y_1 + y_2}{2} | < | c_0 - 1 - \frac{y_1 + y_2}{2}|.
$$
We conclude that 
\begin{equation} \label{new_Ineq2}
F(y_1, y_2, \cdots,  y_m) < F(c_0 -1,  y_1 + y_2- (c_0-1), \cdots, y_m).
\end{equation}

By continuous application of Inequalities (\ref{new_Ineq1}) and (\ref{new_Ineq2}), $F$ is maximized when 
\begin{align*}
&y_1 = \cdots = y_{M-1} = c_0 -1, \quad y_M = d - (M-1)(c_0-1), \\
&M = \left \lceil \frac{d}{c_0-1} \right \rceil .
\end{align*}
Thus, 
\begin{equation} \label{Rx_singleLB}
F(\cdot) \leq {c_0 \choose 2} M = {c_0 \choose 2} \left \lceil \frac{d}{c_0-1} \right \rceil  = F_0.
\end{equation}

If $x$ nodes are captured, the worst case is that they do not share keys and $xF_0$
connections are compromised, which implies the result in (\ref{Rx_LB}). 

\end{proof}

\subsection{Example}
In this section, we revisit the special case discussed in Section \ref{sec_example_should}, i.e. 
$G_T^c$ in $G_T = (G_T^c, G_T^u, \emptyset)$ is the design graph of a $g$-design with $g=2$ and parameters $(\lambda, k, r, v, b)$.
Now we apply the Matching and Reducing Algorithm and choose the parameter $c_0$ to be $\lambda$, so that the network resiliency is not worse than the ``natural'' method. Here are the reasons:
\begin{itemize}
\item Due to Theorem \ref{bridge_s2}, the edge set of $G_T^c$ is a disjoint union of ${v \choose 2}$ 
subsets, each of which forms a clique of size $\lambda$. Further, when MAR with $c_0=\lambda$ is applied,  
it is clear that the minimal number of keys is achieved when the $ C^l$ in each step is one of the ${v \choose 2}$ cliques.
  
\item Therefore, there are ${v \choose 2}$ keys in total, and the number of keys required by each node is 
$$
 \frac{d}{\lambda-1} = \frac{ 2 | E(G_T^c) |}{b} \frac{1}{\lambda-1}
= \frac{ 2 {v \choose 2} {\lambda \choose 2} }{b} \frac{1}{\lambda-1} = {k \choose 2},
$$ 
where the last equality is due to Equation (\ref{bridge2_par}).

\item If one node is captured, there will be ${k \choose 2} {\lambda \choose 2}$ connections compromised, which is the same as the ``natural'' method.

\item Finally, if $c_0$ is chosen to be larger, the network resiliency obviously decreases.

\end{itemize}

Returning to Example \ref{example_cute}, the parameter $c_0$ is chosen to be $3$.
There are ${v \choose 2} =28$  keys in total, and each node requires $ {k \choose 2}  = 6$ keys.
The keys/key rings can be realized as:
%
\begin{align*}
\mathfrak{B} = &
\{ \{ 12 ,  13 ,  14 ,  23 ,  24 ,  34 \}, 
\{ 12 ,  15 ,  16 ,  25  ,  26 ,  56 \}, \\
&\{ 12 ,  17,  18,  27,  28,  78 \}, 	
\{ 13 ,  16,  18, 36,  38,  68 \}, \\
&\{ 13 ,  15,  17,  35  ,  37,  57 \},
\{ 14,  15, 18,  45 ,  48,  58\},	\\
&\{ 14,  16,  17,  46,  47,  67\},
\{ 56,  57 ,  58,  67,  68 ,  78 \},\\
&\{ 34,  37,  38,  47,  48, 78 \},	
\{ 34,  35,  36,  45 ,  46,  56 \},\\
&\{ 24,  25 ,  27,  45 ,  47,  57 \},
\{ 24,  26,  28,  46 ,  48,  68 \}, 	\\
&\{ 23,  26 ,  27,  36,  37,  67 \},
\{ 23 ,  25 ,  28,  35 ,  38,  58 \} \}.
\end{align*}
Here, each key is uniquely denoted by a two-digit integer. For example, $12$ represents a key, and $13$ represents another key.

Clearly, if one node is captured, $6 \cdot { c_0 \choose 2} = 18 $ connections are compromised, which is the same as the ``natural'' approach.

 Moreover, if $c_0$ is chosen to be $2$, the network resiliency is improved. This is because
 every connection is secured by a unique key, and if one node is captured, only $d=12<18$ connections are compromised. 
 However, the storage overhead increases and network scalability decreases. 
 
 If $c_0$ is chosen to be $4$, the network resiliency decreases. To observe this, consider the following case:

Label the the $14$ nodes to be $n_1, \cdots, n_{14}$, and let two nodes $n_i, n_j$ be disconnected if and only if 
$|i - j|=7$ 
(Figure \ref{14pointGraph}). If we apply MAR with $c_0=4$, then it can be assumed that the following four cliques of size $4$ appear 
in the ``clique reduction procedure'' (by possible relabeling):
\begin{align*}
&\{n_1, n_2, n_3, n_4\}, \{n_1, n_5, n_6, n_7\}, \\
& \{n_1, n_9, n_{10}, n_{11}\}, \{n_1, n_{12}, n_{13}, n_{14}\}.
\end{align*}
This means that if node $n_1$ is captured, the four keys, along with the $4 \cdot {4 \choose 2} = 24>18$ 
connections among the above four cliques, are compromised. 


In summary, the Matching and Reducing Algorithm provides a general solution for KPS design given an arbitrary target graph. 
Nevertheless, the previous example reveals that for specific target graphs ($G_T^c$), there is a potential advantage of using $g$-designs ($g>1$) based KPS 
in terms of storage overhead and scalability. 
We believe that $g$-design is a promising design tool for KPS and leave that for future work here.


\section{Conclusion} \label{sec_conclusion}

We proposed the concept of graph-based KPS and relevant evaluation metrics, motivated by several practical observations.
We introduced the $g$-designs,  studied some of their connections with graph theory, and applied them to KPS constructions.  
Two specific target graphs are considered. Especially, we introduced an algorithm framework called the Matching and Reducing Algorithm. Examples are provided to demonstrate the performance of the proposed scheme.


\appendices

\section{Proof of Theorem \ref{bridge_s2} } \label{proof_bridge_s2}

Consider a $g$-design $(V, \mathfrak{B}) $ with $g=2$ and design graph $ G$. Equation (\ref{ddl}) implies that $G$ has $b=\frac{\lambda v(v-1)}{k (k-1)}$ nodes.
Because two connected blocks share exactly $g=2$ keys, all edges in $ G$ are induced by a pair of keys in $V$.
Besides this, any pair of keys induces a clique of size $\lambda$ in $ G$. Two cliques do not share an edge, otherwise there exists two blocks
that share at least three keys. Finally, every block intersect with other  $\frac{k(k-1)}{2} (\lambda-1)$ blocks, because each block contains $\frac{k(k-1)}{2}$ pairs and each pair belongs to $\lambda-1$ other blocks. This implies that $ G$ is regular.

\section{Proof of Theorem \ref{bridge_s1} } \label{proof_bridge_s1}

Sufficiency:

Assume that a ($\lambda=1, k, v)$-BIBD exists. Let $ G$ be its design graph. 
Any point in $V$ induces $r$ nodes that are connected to one another, forming 
a clique of size $r$. Besides this, any two cliques induced by two different points do not share an edge, because otherwise 
these two points appear in two different blocks contradicting $\lambda=1$.
Finally, every block intersects with other $(r-1)k$ blocks, because each block contains $k$ points and each point belongs to $r-1$ other blocks. This implies that $ G$ is regular. 

Necessity:

Let $$k=\frac{rv}{b} = \frac{v-1}{r} + 1.$$
Applying MAR to the given graph $ G$ for $v$ iterations, 
we obtain $(V, \mathfrak{B})$ together with an empty graph $ G^l$. 
Due to MAR, $|V| = v$, each key appears in $r$ blocks, and each block in $\mathfrak{B}$ contains 
$$
\frac{d}{r-1} = \frac{2e} {b} \frac{1}{r-1} = 2  \frac{vr(r-1)}{2} \frac{1}{b(r-1)} = k
$$
keys.
Next we prove that any pair of keys appear in exactly one block. To this end, we let $b_{ij}$ be the number of blocks which  
keys $i$ and $j$ both belong to, and count the value of $B = \sum_{1 \leq i,j \leq v} b_{ij}$ in two different ways:
on one hand, because each block contains $k$ points and there are $b$ blocks,
$B$ can be calculated as 
$$
B = \frac{k(k-1)}{2} b = \frac{k(k-1)}{2} \frac{v(v-1)}{k(k-1)} = \frac{v(v-1)}{2}.
$$
On the other hand,
any pair of keys appear in no more than one block, because $G$ is decomposed into $v$ cliques any two of which share no edges. 
Moreover, $B$ is no more than the total number of pairs $v(v-1)/2$. 
Therefore, any pair must exactly appear in one block.

\ifCLASSOPTIONcaptionsoff
  \newpage
\fi

\bibliographystyle{IEEEtran}
\bibliography{KPS}

\begin{IEEEbiographynophoto}{Jie Ding}
is a Ph.D. candidate in the School of Engineering and Applied Sciences, Harvard University. 
His current research are in cyclic difference sets, sequence 
designs, time series, and information theory.  
\end{IEEEbiographynophoto}

\begin{IEEEbiographynophoto}{Abdelmadjid Bouabdallah}
received the Engineering degree from USTHB-Algeria, Master degree in 1988 and Ph.D. from university of Paris-sud Orsay (France) in 1991. From 1992 to 1996, he was Assistant Professor at university of Evry-Val-d'Essonne (France) and since 1996 he is Professor at University of Technology of Compiegne (UTC) where he is leading the Networking \& Security research group and the Interaction \& Cooperation research of the Excellence Research Center LABEX MS2T. His research Interest includes Internet QoS,
security, unicast/multicast communication, Wireless Sensor Networks, and fault tolerance in wired/wireless networks. He conducted several large scale research projects founded by well known companies (Motorola Labs., Orange Labs., CEA, etc) as well as academy (ANR-RNRT, CNRS, ANR-Carnot).
\end{IEEEbiographynophoto}

\begin{IEEEbiographynophoto}{Vahid Tarokh}
 is a professor of applied mathematics in the School of Engineering and Applied Sciences, Harvard University. 
His current research interests are in data analysis, network security, optical surveillance, and radar theory.
\end{IEEEbiographynophoto}

\end{document}